\documentclass[aps,preprint]{revtex4}%
\usepackage{amsfonts}
\usepackage{amsmath}
\usepackage{amssymb}
\usepackage{graphicx}%
\setcounter{MaxMatrixCols}{30}
\newtheorem{theorem}{Theorem}

\newtheorem{lemma}{Lemma}

\newenvironment{proof}[1][Proof]{\noindent\textbf{#1.} }{\ \rule{0.5em}{0.5em}}
\begin{document}
\preprint{Internal Communication }
\title{The cmb dipole and existence of a center for expansion of the universe}
\author{Yukio Tomozawa}
\affiliation{Michigan Center for Theoretical Physics}
\affiliation{Randall Laboratory of Physics, University of Michigan}
\affiliation{Ann Arbor, MI. 48109-1040, USA}
\date{\today }

\begin{abstract}
It is shown that the observed cmb dipole implies the existence of a center for
the expansion of the universe, and that it can be explained by a combination
of peculiar velocity and Hubble flow.

\end{abstract}

\pacs{95.80.+p, 98.65.-r, 98.70.Vc, 98.80.-k}
\maketitle

\section{\label{sec:level1}Introduction}

In the Friedman universe, one possible interpretation of the coordinates is
that the whole space is on the surface of an expanding balloon and has no
center. (A center exists outside the universe in a way.) This accommodates the
Hubble law naturally. There are no velocities associated with points of the
universe, but the relative distance and relative velocity of any two points
increase with the expansion of the balloon. I will show that in such a
universe, there is no cosmic microwave backgroud (cmb) dipole even in the
presence of a peculiar velocity. In other words, the observation of a cmb
dipole excludes such an interpretation of the coordinates for the Friedman universe.

\section{The CMB Dipole in the Temperature Distribution in Cosmology without
Center}

A dipole component was observed in the temperature distribution in the cmb
measurement. The cmb blueshift dipole for the solar system \cite{dipole} is
given by%
\begin{equation}
v=371\pm0.5\text{ }km/s,\text{ \ \ \ }l=264.4\pm0.3%
{{}^\circ}%
,\text{ \ \ \ }b=48.4\pm0.5%
{{}^\circ}%
.
\end{equation}
Or equivalently, the cmb dipole for redshift is%
\begin{equation}
v=371\pm0.5\text{ }km/s,\text{ \ \ \ }l=84.4\pm0.3%
{{}^\circ}%
,\text{ \ \ \ }b=-48.4\pm0.5%
{{}^\circ}%
. \label{dipole}%
\end{equation}
In the absence of peculiar velocity, there is no cmb dipole in a cosmology
without a center. I will state that as a lemma.

\begin{lemma}
There is no cmb dipole at any point of the universe in a cosmology without
center, in the absence of a peculiar velocity.
\end{lemma}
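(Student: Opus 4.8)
The plan is to prove this by exploiting the homogeneity and isotropy of the Friedman universe in the "balloon" interpretation, where no point carries a preferred velocity. Let me sketch the physical setup and the key mechanism.

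The plan is to exploit the fact that in a Friedman universe without a center the spacetime is homogeneous and isotropic about every comoving worldline, so that a comoving observer --- one with zero peculiar velocity --- necessarily measures an isotropic microwave background, in particular one with no dipole (indeed with no angular structure at all). The whole argument rests on the observation that the cosmological redshift of the CMB is a purely geometric quantity, fixed by the scale factor at emission and at observation, and hence independent of the direction of arrival.

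First I would write down the FRW line element
\[
ds^2 = -c^2\,dt^2 + a(t)^2\left[\frac{dr^2}{1-kr^2} + r^2\,d\Omega^2\right],
\]
and note that in these comoving coordinates an observer with no peculiar velocity has four-velocity $u^\mu = (1,0,0,0)$ (up to normalization, with proper time equal to $t$). The frequency such an observer assigns to a photon of four-momentum $k^\mu$ is $\omega = -g_{\mu\nu}u^\mu k^\nu$. Solving the null geodesic equation --- or invoking the conserved quantity associated with spatial homogeneity --- gives the standard result that along any photon path the frequency measured by comoving observers scales as $\omega \propto 1/a(t)$.

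Next I would use the fact that the CMB photons are all emitted on the surface of last scattering at essentially a single cosmic time $t_e$, a consequence of homogeneity: recombination occurs at the same value of $a$ everywhere. Hence the redshift recorded by a comoving observer at time $t_0$ is
\[
1+z = \frac{\omega_{\text{emit}}}{\omega_{\text{obs}}} = \frac{a(t_0)}{a(t_e)},
\]
which depends only on the two scale factors and not at all on the direction $\hat n$ from which the photon arrives. Since the observed temperature satisfies $T_{\text{obs}}(\hat n) = T_e/(1+z(\hat n))$, it follows that $T_{\text{obs}}$ is the same in every direction, so the angular distribution is a pure monopole and the dipole vanishes. Because every comoving point is equivalent, this holds at any point of the universe.

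The hard part will be justifying cleanly that the redshift is purely geometric --- fixed by $a(t_e)$ and $a(t_0)$ alone --- with no residual direction dependence, and in particular that the centerless ``balloon'' interpretation indeed identifies the zero-peculiar-velocity observer with a comoving observer. Once that identification is granted, the isotropy of the redshift and hence the absence of a dipole follow from the rotational symmetry of the FRW geometry about each comoving worldline, and the only remaining work is to confirm that introducing a genuine peculiar velocity $v$ breaks this symmetry through the special-relativistic factor $\sim (v/c)\cos\theta$ --- which is precisely what the subsequent sections must supply in order to reproduce the observed dipole.
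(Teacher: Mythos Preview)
Your proof is correct and is essentially a more formal version of the paper's own argument: the paper simply notes, in the balloon picture, that the redshift is the ratio $l/l_{0}$ of post- to pre-expansion distances and that this ratio is the same in every direction, while you derive the equivalent statement $1+z=a(t_{0})/a(t_{e})$ from the FRW metric and null geodesics. The underlying mechanism---cosmological redshift is a direction-independent geometric ratio for a comoving observer---is identical.

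One caution about your final paragraph: you anticipate that the subsequent sections will show that a peculiar velocity \emph{does} produce the usual $(v/c)\cos\theta$ dipole. The paper in fact argues the opposite---its central Theorem claims that in a centerless cosmology there is \emph{no} cmb dipole even with a peculiar velocity---and then uses the observed dipole to infer that the universe must have a center. So your Lemma proof is fine, but your expectation of where the paper is headed is inverted.
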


\begin{proof}
This is almost self-evident. In any direction from a point in the universe,
the distance $l_{0}$ from a cmb emitter to a selected point becomes $l$ after
expansion and the redshift factor is given by%
\begin{equation}
1+z=\frac{l}{l_{0}} \label{eq1}%
\end{equation}
and this value is the same for all directions. Of course, differences in the
redshift or the temperature distribution in the cmb measurement come from the
structure variation of the emitters, which is the whole issue of the cmb phenomenon.
\end{proof}

The following theorem is surprising.

\begin{theorem}
There is no cmb dipole at any point in the universe in a cosmology without
center, even in the presence of a peculiar velocity $v_{p}$.
\end{theorem}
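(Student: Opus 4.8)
The plan is to reduce the theorem to the isotropy already established in the Lemma, by showing that a peculiar velocity deforms the emitter-to-observer distances in a way that leaves the redshift ratio $l/l_{0}$ unchanged in every direction. First I would fix an observer at an arbitrary point of the universe and, following the Lemma, record that in the absence of peculiar velocity the redshift in every direction obeys $1+z=l/l_{0}$, the same ratio for all lines of sight. I would then introduce the peculiar velocity $v_{p}$ as a proper displacement of the observer along a fixed direction, superposed on the Hubble flow, and ask how the two distances entering the ratio are modified.

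The central step is to track those two proper distances to first order. For a line of sight making angle $\theta$ with $v_{p}$, the displacement changes the present distance to the emitter by its projection on the line of sight, so $l\to l-d\cos\theta$, where $d$ is the proper displacement accumulated by the observation epoch. The same displacement must then be carried back to the emission epoch, where it equals $d_{0}$, so that the emission-epoch distance becomes $l_{0}-d_{0}\cos\theta$. The observed redshift in that direction is therefore
\begin{equation}
1+z'=\frac{l-d\cos\theta}{l_{0}-d_{0}\cos\theta}.
\end{equation}

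The heart of the argument, and the step I expect to be the main obstacle, is to establish that $d/d_{0}=l/l_{0}=1+z$, i.e. that the peculiar displacement is stretched by the expansion in exactly the same proportion as every other proper length between the two epochs. Granting this, the factor $1+z$ pulls out of the numerator and the $\cos\theta$ terms cancel identically,
\begin{equation}
1+z'=\frac{(1+z)\left(l_{0}-d_{0}\cos\theta\right)}{l_{0}-d_{0}\cos\theta}=1+z,
\end{equation}
so the redshift is again independent of $\theta$ and of the azimuth about $v_{p}$. No dipole then survives at any point, which is the surprising content of the theorem. The delicate point is justifying the common scaling $d/d_{0}=l/l_{0}$: I would argue that the peculiar displacement is itself a proper length embedded in the expanding geometry, so between emission and observation it is rescaled by the same factor $l/l_{0}$ as the cosmological distances, rather than being governed by independent line-of-sight Doppler kinematics. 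Showing that this scaling, and not a naive directional Doppler shift, is the correct bookkeeping in a centerless cosmology is where the real work, and the real surprise, lies.
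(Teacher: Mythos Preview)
Your proposal is correct within the paper's framework, but it does not coincide with any one of the three arguments the paper actually gives; it is closest in spirit to the paper's Proof~III. Proof~I argues that, viewed from the rest frame of the peculiar velocity, both $l_{0}$ and $l$ are Lorentz contracted by the common factor $\sqrt{1-(v_{p}\cos\theta/c)^{2}}$, so the ratio $l/l_{0}=1+z$ is unchanged in every direction. Proof~II converts $1+z$ into an equivalent emitter speed that, for $z\sim 1000$, lies within $O(10^{-6})$ of $c$, and then uses the relativistic velocity-addition formula to show that composing with $v_{p}\cos\theta$ alters that speed only at order $10^{-6}v_{p}$, so no dipole survives. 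Proof~III simply observes that an object moving with peculiar velocity $v_{p}$ is at rest with respect to the comoving point at Hubble distance $v_{p}/H_{0}$, which by the Lemma sees an isotropic cmb. Your argument is essentially a geometric unpacking of Proof~III: the step you flag as delicate, $d/d_{0}=l/l_{0}$, is exactly the statement that the peculiar motion places the observer on another comoving worldline, whose proper separation from the original one then scales with the expansion like every other length. What your version buys is an explicit, direction-by-direction check that the $\cos\theta$ contributions cancel in the ratio; what the paper's Proof~III buys is brevity by invoking the Lemma wholesale. Proofs~I and~II take genuinely different routes, through Lorentz contraction and relativistic velocity composition respectively, neither of which appears in your argument.
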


I will present the proof in 3 ways.

\begin{proof}
I. Seen from the rest frame of a peculiar velocity, both $l_{0}$ and $l$ are
Lorentz contracted by the same factor $\sqrt{1-(v_{p}\cos\theta/c)^{2}}$,
where $\theta$ is the angle between the emitter and the peculiar velocity, and
their ratio in Eq. (\ref{eq1}) is unchanged. This is true for all directions.
\end{proof}

\begin{proof}
II. Relating the equivalent velocity of the cmb emitter $v$ to the expansion
rate $1+z$ by%
\begin{equation}
\sqrt{\frac{1+v/c}{1-v/c}}=1+z,
\end{equation}
one gets%
\begin{equation}
v/c=\frac{(1+z)^{2}-1}{(1+z)^{2}+1}=1-2\frac{1}{(1+z)^{2}}=1-2\times10^{-6}
\label{eq2}%
\end{equation}
for $z=1000$. The relative velocity of the emitter and the peculiar velocity
$v_{p}$ in the direction of the emitter is%
\begin{equation}
\frac{v-v_{p}\cos\theta}{1-vv_{p}\cos\theta/c^{2}}=v-v_{p}\cos\theta
+(v/c)^{2}v_{p}\cos\theta=v-O(4\times10^{-6}v_{p}\cos\theta)
\end{equation}
It is easy to see that this result is valid in any direction.
\end{proof}

\begin{proof}
III. An object that moves with peculiar velocity $v_{p\text{ }}$ is at rest
with respect to an object at a distance of $v_{p}/H_{0}$, where $H_{0}$ is the
Hubble constant, which does not have a cmb dipole by the Lemma. Therefore, an
object with a peculiar velocity should not have a cmb dipole.
\end{proof}

All three proofs give the same result. Another way to look at this theorem is
that the equivalent speed of a cmb emitter is close to that of light and speed
of light is identical for moving frames. We have reached the important
conclusion that in a cosmology without center there is no cmb dipole. As a
corollary, we state a theorem.

\begin{theorem}
The observation of the cmb dipole excludes the possibility of a cosmology
without center. Thus, there has to be a center for the expansion of the
universe, since a cmb dipole has been observed for the solar
system\cite{dipole}.
\end{theorem}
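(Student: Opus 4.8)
The plan is to argue by contraposition from the preceding Theorem. That Theorem establishes the implication ``a cosmology without center $\Rightarrow$ no cmb dipole, even in the presence of a peculiar velocity.'' Writing $P$ for the proposition ``the cosmology has no center'' and $Q$ for ``no cmb dipole is present at the observation point,'' the Theorem is exactly $P\Rightarrow Q$, so its contrapositive $\neg Q\Rightarrow\neg P$ reads: the presence of a cmb dipole implies that the cosmology is \emph{not} centerless. The observational input supplies precisely $\neg Q$, since a dipole of amplitude $v=371\pm0.5$ km/s is measured for the solar system, Eq.~(\ref{dipole}), and the contrapositive then delivers $\neg P$ at once. First, then, I would state this logical reduction explicitly and invoke the Theorem as a black box, so that the entire physical content is borrowed rather than re-derived.

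Second, I would pin down the dichotomy hidden in the phrase ``there has to be a center.'' The admissible cosmological models split into those possessing a center of expansion and those that do not; having excluded the second class, the first is forced. Before concluding, however, I would verify that the measured anisotropy is the right object to feed into $\neg Q$. The Lemma cautions that direction dependence in the cmb temperature can also arise from structure variation of the emitters. I would therefore argue that the signal of Eq.~(\ref{dipole}) is a clean, purely dipolar, kinematically aligned term, not an emitter-structure artifact, and hence is exactly the quantity the Theorem forbids in a centerless model even when a peculiar velocity $v_{p}$ is allowed.

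The hard part will be the exhaustiveness of the dichotomy: the contrapositive rigorously rules out the balloon picture analyzed above, but one must be sure that ``has a center'' genuinely exhausts the complement, with no third category slipping between the Theorem's hypothesis and its negation. I would close this gap by tying the mechanism of the Theorem to its converse. The cancellation responsible for the Theorem, namely the common contraction factor $\sqrt{1-(v_{p}\cos\theta/c)^{2}}$ shared by $l_{0}$ and $l$, equivalently the near-luminal equivalent speed $v/c=1-2\times10^{-6}$ of the emitter, operates in any model that assigns no intrinsic velocity to comoving points. A dipole that survives this cancellation can therefore only signal a genuine rest velocity of the observer relative to the bulk of emitters, that is, a preferred spatial origin from which the Hubble flow emanates. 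Identifying that surviving dipole with the existence of a center is the step that will demand the most care, since it is here that the logical exclusion must be converted into the positive physical assertion of the Theorem.
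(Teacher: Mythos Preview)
Your approach is essentially the paper's: the paper presents this statement explicitly ``as a corollary'' of the preceding Theorem and offers no separate proof, so the intended argument is precisely the contrapositive you spell out, together with the observational input of Eq.~(\ref{dipole}). Your additional care about the exhaustiveness of the centerless/centered dichotomy and about distinguishing the dipole from emitter-structure anisotropies goes beyond what the paper supplies, but it does not change the route.
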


The author has discussed the implications of the cmb dipole for cosmologies
with centers elsewhere\cite{center}, to which we refer for details. There, the
observed cmb dipole is related to a combination of Hubble flow and peculiar velocity.

\begin{acknowledgments}
The author would like to thank Jean P. Krisch and David N. Williams for useful information.
\end{acknowledgments}

Correspondence should be addressed to the author at tomozawa@umich.edu.

\bigskip\bigskip

\end{document}